\documentclass[11pt,a4paper]{article}%
\usepackage{amsfonts}
\usepackage{amsmath}
\usepackage{amssymb}
\usepackage{graphicx}
\usepackage{geometry}%
\setcounter{MaxMatrixCols}{30}
\providecommand{\U}[1]{\protect\rule{.1in}{.1in}}
\newtheorem{theorem}{Theorem}

\newtheorem{corollary}{Corollary}

\newtheorem{proposition}{Proposition}
\newtheorem{remark}{Remark}

\newenvironment{proof}[1][Proof]{\textbf{#1.} }{\  \rule{0.5em}{0.5em}}
\makeatletter
\def \@removefromreset#1#2{\let \@tempb \@elt
\def \@tempa#1{@&#1}\expandafter \let \csname @*#1*\endcsname \@tempa
\def \@elt##1{\expandafter \ifx \csname @*##1*\endcsname \@tempa \else
\noexpand \@elt{##1}\fi}     \expandafter \edef \csname cl@#2\endcsname{\csname cl@#2\endcsname}     \let \@elt \@tempb
\expandafter \let \csname @*#1*\endcsname \@undefined}

\@removefromreset{equation}{section}

\@removefromreset{theorem}{section}
\makeatother
\begin{document}

\title{New general lower and upper bounds under minimum-error quantum state discrimination}
\author{Elena R. Loubenets\\National Research University Higher School of Economics, \\Moscow 101000, Russia }
\maketitle

\begin{abstract}
For the optimal success probability under minimum-error discrimination between
$r\geq2$ arbitrary quantum states prepared with any a priori probabilities, we
find new general analytical lower and upper bounds and specify the relations
between these new general bounds and the general bounds known in the
literature. We also present the example where the new general analytical
bounds, lower and upper, on the optimal success probability are tighter than
most of the general analytical bounds known in the literature. The new upper
bound on the optimal success probability explicitly generalizes to $r>2$ the
form of the Helstrom bound. For $r=2$, each of our new bounds, lower and
upper, reduces to the Helstrom bound.

\end{abstract}

\section{Introduction}

Different aspects of quantum state discrimination are discussed in the
literature ever since the seminal papers of Helstrom and Holevo
\cite{1,2,3,4,5} and are now presented in many textbooks and reviews, see, for
example, in \cite{6,7,8} and references therein.

Let a sender prepare a quantum system described in terms of a complex Hilbert
space $\mathcal{H}$ in one of $r\geq2$ quantum states $\rho_{1},...,\rho_{r},$
pure or mixed, with probabilities $q_{1},...,q_{r},$ $\sum_{i}q_{i}%
=1,\mathit{\ }q_{i}>0$, and send this quantum system in an initial state
\begin{equation}
\rho=\sum_{i=1,...,r}q_{i}\rho_{i},\mathit{\ \ \ }\sum_{i=1,..,r}%
q_{i}=1,\mathit{\ \ \ }q_{i}>0, \label{1}%
\end{equation}
to a receiver. For discriminating between states $\rho_{1},...,\rho_{r},$ a
receiver performs a measurement described by a POV measure
\begin{equation}
\mathrm{M}_{r}=\left\{  \mathrm{M}_{r}(i),\text{ \ }i=1,...,r;\text{ \ \ }%
\sum_{i=1,...,r}\mathrm{M}_{r}(i)=\mathbb{I}_{\mathcal{H}}\right\}  \label{2}%
\end{equation}
and the success probability to take under this measurement the proper decision
equals to%
\begin{equation}
\mathrm{P}_{\rho_{1},...,\rho_{_{r}}|q_{1},...,q_{_{r}}}^{success}%
(\mathrm{M}_{r})=\sum_{i=1,...,r}q_{i}\mathrm{tr}\left\{  \rho_{i}%
\mathrm{M}_{r}(i)\right\}  , \label{3}%
\end{equation}
correspondingly, the error probability
\begin{align}
\mathrm{P}_{\rho_{1},...,\rho_{_{r}}|q_{1},...,q_{_{r}}}^{error}%
(\mathrm{M}_{r})  &  =\sum_{i=1,...,r}q_{i}\mathrm{tr}\left\{  \rho
_{i}(\mathbb{I}-\mathrm{M}_{r}(i))\right\} \label{3.1}\\
&  =1-\mathrm{P}_{\rho_{1},...,\rho_{_{r}}|q_{1},...,q_{_{r}}}^{success}%
(\mathrm{M}_{r}).\nonumber
\end{align}

Denote by $\mathfrak{M}_{r}=\left\{  \mathrm{M}_{r}\right\}  ,$ $r\geq2,$ the
set of all possible POV measures (\ref{2}). Under the maximum likelihood (the
minimum error) state discrimination strategy, the optimal success probability
and the optimal error probability are given by%
\begin{align}
\mathrm{P}_{\rho_{1},...,\rho_{_{r}}|q_{1},...,q_{_{r}}}^{opt.success}  &
:\text{ }=\max_{\mathrm{M}_{r}\text{ }\in\text{ }\mathfrak{M}_{r}}%
\mathrm{P}_{\rho_{1},...,\rho_{_{r}}|q_{1},...,q_{_{r}}}^{success}%
(\mathrm{M}_{r}),\label{4}\\
\mathrm{P}_{\rho_{1},...,\rho_{_{r}}|q_{1},...,q_{_{r}}}^{opt.error}  &
:\text{ }=\min_{\mathrm{M}_{r}\text{ }\in\text{ }\mathfrak{M}_{r}}%
\mathrm{P}_{\rho_{1},...,\rho_{_{r}}|q_{1},...,q_{_{r}}}^{error}%
(\mathrm{M}_{r})=1-\mathrm{P}_{\rho_{1},...,\rho_{_{r}}|q_{1},...,q_{_{r}}%
}^{opt.success}, \label{4.1}%
\end{align}
and are attained at some extreme point of the convex set $\mathfrak{M}_{r}.$

The alternative expressions for the optimal error probability (\ref{4.1}) are
presented in Theorem 1 and Corollary 1 of \cite{9-1}.

The following general statement was first formulated and proved by Holevo in
\cite{3,4}.

\begin{theorem}
Under the maximum likelihood (the minimum error) state discrimination
strategy, a POV measure $\mathrm{M}_{r}^{(opt)}$ $\in$ $\mathfrak{M}_{r}$ is
optimal if and only if there exists a self-adjoint trace class operator
$\Lambda_{0}$ such that: \textrm{(i)} $\left(  \Lambda_{0}-q_{i}\rho
_{i}\right)  \mathrm{M}_{r}^{(opt)}(i)=0;$ \textrm{(ii)} $\Lambda_{0}\geq
q_{i}\rho_{i},$ for all $i=1,...,r.$ Herewith,%
\begin{equation}
\mathrm{P}_{\rho_{1},...,\rho_{_{r}}|q_{1},...,q_{_{r}}}^{opt.success}%
=\mathrm{tr}\{\Lambda_{0}\},\text{ \ \ \ \ }\Lambda_{0}=\sum_{i=1,...,r}%
q_{i}\rho_{i}\mathrm{M}_{r}^{(opt)}(i). \label{5.1}%
\end{equation}

\end{theorem}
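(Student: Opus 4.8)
The plan is to prove the two implications separately, treating the displayed identities (\ref{5.1}) as part of the sufficiency conclusion; throughout I write $\mathrm{P}^{success}(\cdot)$ as shorthand for the success probability (\ref{3}). For sufficiency, suppose a self-adjoint trace class $\Lambda_{0}$ satisfies \textrm{(i)} and \textrm{(ii)}. First I would extract (\ref{5.1}): rewriting \textrm{(i)} as $\Lambda_{0}\mathrm{M}_{r}^{(opt)}(i)=q_{i}\rho_{i}\mathrm{M}_{r}^{(opt)}(i)$ and summing over $i$ with the completeness relation $\sum_{i}\mathrm{M}_{r}^{(opt)}(i)=\mathbb{I}$ gives $\Lambda_{0}=\sum_{i}q_{i}\rho_{i}\mathrm{M}_{r}^{(opt)}(i)$, whence $\mathrm{tr}\{\Lambda_{0}\}=\sum_{i}q_{i}\mathrm{tr}\{\rho_{i}\mathrm{M}_{r}^{(opt)}(i)\}=\mathrm{P}^{success}(\mathrm{M}_{r}^{(opt)})$. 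Then, for an arbitrary $\mathrm{M}_{r}\in\mathfrak{M}_{r}$, I would write $\mathrm{tr}\{\Lambda_{0}\}-\mathrm{P}^{success}(\mathrm{M}_{r})=\sum_{i}\mathrm{tr}\{(\Lambda_{0}-q_{i}\rho_{i})\mathrm{M}_{r}(i)\}$ using $\sum_{i}\mathrm{M}_{r}(i)=\mathbb{I}$, and invoke the elementary fact that $\mathrm{tr}\{AB\}=\mathrm{tr}\{B^{1/2}AB^{1/2}\}\geq0$ whenever $A,B\geq0$. By \textrm{(ii)} each summand is nonnegative, so every measure is dominated by $\mathrm{tr}\{\Lambda_{0}\}$, a value attained by $\mathrm{M}_{r}^{(opt)}$; hence $\mathrm{M}_{r}^{(opt)}$ is optimal and $\mathrm{P}^{opt.success}=\mathrm{tr}\{\Lambda_{0}\}$.

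For necessity, let $\mathrm{M}_{r}^{(opt)}$ be optimal (its existence is guaranteed, the maximum being attained at an extreme point of $\mathfrak{M}_{r}$) and define $\Lambda_{0}:=\sum_{i}q_{i}\rho_{i}\mathrm{M}_{r}^{(opt)}(i)$, which is trace class since each $\rho_{i}$ is. The first step is self-adjointness. I would perturb $\mathrm{M}_{r}^{(opt)}(i)\mapsto U\mathrm{M}_{r}^{(opt)}(i)U^{\dagger}$ with $U=e^{i\epsilon H}$, $H$ self-adjoint; this is again a POV measure, admissible for both signs of $\epsilon$, so optimality forces the first-order term to vanish: $\sum_{i}\mathrm{tr}\{[q_{i}\rho_{i},H]\mathrm{M}_{r}^{(opt)}(i)\}=0$ for every self-adjoint $H$. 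Rearranging by cyclicity of the trace gives $\mathrm{tr}\{H(\Lambda_{0}^{\dagger}-\Lambda_{0})\}=0$ for all self-adjoint $H$; writing the skew-adjoint operator $\Lambda_{0}^{\dagger}-\Lambda_{0}=iK$ with $K$ self-adjoint and taking $H=K$ forces $\mathrm{tr}\{K^{2}\}=0$, so $K=0$ and $\Lambda_{0}=\Lambda_{0}^{\dagger}$.

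The second step is condition \textrm{(ii)}. Fixing an index $k$ and a positive $P\leq\mathbb{I}$, I would use the positivity-preserving perturbation $\mathrm{M}_{\epsilon}(j)=(\mathbb{I}-\epsilon P)^{1/2}\mathrm{M}_{r}^{(opt)}(j)(\mathbb{I}-\epsilon P)^{1/2}$ for $j\neq k$ and $\mathrm{M}_{\epsilon}(k)=(\mathbb{I}-\epsilon P)^{1/2}\mathrm{M}_{r}^{(opt)}(k)(\mathbb{I}-\epsilon P)^{1/2}+\epsilon P$; one checks $\sum_{j}\mathrm{M}_{\epsilon}(j)=\mathbb{I}$ and each term is positive for $0\leq\epsilon\leq1$. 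Expanding $(\mathbb{I}-\epsilon P)^{1/2}=\mathbb{I}-\tfrac{\epsilon}{2}P+O(\epsilon^{2})$ and using $\Lambda_{0}=\Lambda_{0}^{\dagger}$, the success probability becomes $\mathrm{P}^{success}(\mathrm{M}_{\epsilon})=\mathrm{P}^{success}(\mathrm{M}_{r}^{(opt)})+\epsilon\,\mathrm{tr}\{(q_{k}\rho_{k}-\Lambda_{0})P\}+O(\epsilon^{2})$. Optimality forces the linear coefficient to be nonpositive for every such $P$, i.e. $\mathrm{tr}\{(\Lambda_{0}-q_{k}\rho_{k})P\}\geq0$; taking $P=|\psi\rangle\langle\psi|$ over all $\psi$ yields $\Lambda_{0}\geq q_{k}\rho_{k}$, which is \textrm{(ii)}. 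Finally \textrm{(i)} follows for free: since $\sum_{i}(\Lambda_{0}-q_{i}\rho_{i})\mathrm{M}_{r}^{(opt)}(i)=0$ by definition of $\Lambda_{0}$, we get $\sum_{i}\mathrm{tr}\{(\Lambda_{0}-q_{i}\rho_{i})\mathrm{M}_{r}^{(opt)}(i)\}=0$; by \textrm{(ii)} each summand is nonnegative, hence each vanishes, and $\mathrm{tr}\{AB\}=0$ with $A,B\geq0$ forces $AB=0$ (the Hilbert--Schmidt norm of $A^{1/2}B^{1/2}$ vanishes), giving $(\Lambda_{0}-q_{i}\rho_{i})\mathrm{M}_{r}^{(opt)}(i)=0$.

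I expect the main obstacle to be the necessity direction, and within it the construction of admissible perturbations that convert the scalar optimality of $\mathrm{M}_{r}^{(opt)}$ into the operator inequality \textrm{(ii)}: the congruence $\mathrm{M}_{r}^{(opt)}(j)\mapsto(\mathbb{I}-\epsilon P)^{1/2}(\cdot)(\mathbb{I}-\epsilon P)^{1/2}$ with compensation into the single outcome $k$ is precisely what keeps every element positive and the resolution of identity intact while isolating the test direction $P$. In the infinite-dimensional (trace class) setting one must additionally justify the first-order expansions and the interchange of trace and limit, but the algebraic skeleton is unchanged.
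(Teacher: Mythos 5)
The paper does not actually prove this theorem: it is Holevo's optimality criterion, stated with a citation to \cite{3,4}, so there is no in-paper argument to compare against. Judged on its own, your proof is correct and is essentially the classical variational argument of Holevo and Yuen--Kennedy--Lax. The sufficiency half is clean: summing \textrm{(i)} against the completeness relation yields $\Lambda_{0}=\sum_{i}q_{i}\rho_{i}\mathrm{M}_{r}^{(opt)}(i)$ and hence $\mathrm{tr}\{\Lambda_{0}\}=\mathrm{P}^{success}(\mathrm{M}_{r}^{(opt)})$, while \textrm{(ii)} together with $\mathrm{tr}\{AB\}\geq0$ for $A,B\geq0$ dominates every competing measure by $\mathrm{tr}\{\Lambda_{0}\}$. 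The necessity half is also sound: the unitary perturbation $U_{\epsilon}(\cdot)U_{\epsilon}^{\dagger}$ kills the skew part of $\Lambda_{0}$, the congruence perturbation $(\mathbb{I}-\epsilon P)^{1/2}(\cdot)(\mathbb{I}-\epsilon P)^{1/2}$ with the compensating $+\epsilon P$ in outcome $k$ is a legitimate one-sided admissible family whose first-order coefficient $\mathrm{tr}\{(q_{k}\rho_{k}-\Lambda_{0})P\}$ must be nonpositive, and \textrm{(i)} then falls out of $\sum_{i}\mathrm{tr}\{(\Lambda_{0}-q_{i}\rho_{i})\mathrm{M}_{r}^{(opt)}(i)\}=0$ term by term via $\mathrm{tr}\{AB\}=0\Rightarrow AB=0$ for positive operators. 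Two points you correctly flag but should not undersell in the infinite-dimensional setting: the existence of a maximizer (which you, like the paper, take as given via compactness/extreme-point considerations of $\mathfrak{M}_{r}$) and the justification of the first-order expansions under the trace; both are where the real analytic work lies in Holevo's original treatment, and your argument is otherwise complete.
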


For $r=2,$ the success probability (\ref{3}) admits the Helstrom upper bound
\cite{1,2,5}
\begin{equation}
\mathrm{P}_{\rho_{1},\rho_{2}|q_{1},q_{2}}^{success}(\mathrm{M}_{2})\leq
\frac{1}{2}\left(  1+\left\Vert q_{1}\rho_{1}-q_{2}\rho_{2}\right\Vert
_{1}\right)  , \label{5}%
\end{equation}
which is attained, so that, for $r=2,$ the optimal success probability is
given by \cite{1,2,5}
\begin{equation}
\mathrm{P}_{\rho_{1},\rho_{2}|q_{1},q_{2}}^{opt.success}=\frac{1}{2}\left(
1+\left\Vert q_{1}\rho_{1}-q_{2}\rho_{2}\right\Vert _{1}\right)  , \label{6}%
\end{equation}
where $\left\Vert \cdot\right\Vert _{1}$ is the trace norm.

For an arbitrary $r>2$, a precise general expression for the optimal success
probability (\ref{4}) is not known, however, there were introduced and studied
\cite{10,11,12,13,14,15,16,17,18, 19,20} several general upper and lower
bounds on the optimal success probability $\mathrm{P}_{\rho_{1},...,\rho
_{_{r}}|q_{1},...,q_{_{r}}}^{opt.success}$, expressed via different
characteristics of quantum states.

As proved by Qiu\&Li \cite{16}, in some cases, the general lower bound on the
optimal error probability (correspondingly, the upper bound on the optimal
success probability $\mathrm{P}_{\rho_{1},...,\rho_{_{r}}|q_{1},...,q_{_{r}}%
}^{opt.success}),$ introduced by them in \cite{16}, is tighter than the other
general lower bounds known in the literature.

Computation of bounds on $\mathrm{P}_{\rho_{1},...,\rho_{_{r}}|q_{1}%
,...,q_{_{r}}}^{opt.success}$within semidefinite programming was considered
recently in \cite{20}.

In the present article, for the optimal success probability (\ref{4}), we find
(Theorems 2, 3) for all $r\geq2$ the new general lower and upper bounds and
specify (Propositions 1, 2) the relation of our bounds to the general lower
and upper bounds known in literature. For $r=2$, each of the new general
bounds, lower and upper, reduces to the Helstrom bound in (\ref{5}), and this
proves in the other way the Helstrom result (\ref{6}).

\section{New lower bounds}

Taking into account that $\mathrm{M}_{r}(j)=\mathbb{I}_{\mathcal{H}}%
-\sum_{i\neq j}\mathrm{M}_{r}(i),$ we rewrite the right-hand side of
expression (\ref{3}) in either of $j$-th representations:%
\begin{align}
\mathrm{P}_{\rho_{1},...,\rho_{_{r}}|q_{1},...,q_{_{r}}}^{success}%
(\mathrm{M}_{r})  &  =\sum_{i=1,...,r}q_{i}\mathrm{tr}\{\rho_{i}\mathrm{M}%
_{r}(i)\}\label{8}\\
&  =q_{j}+\sum_{i=1,...,r}\mathrm{tr}\{\left(  q_{i}\rho_{i}-q_{j}\rho
_{j}\right)  \mathrm{M}_{r}(i)\},\text{ \ \ }j=1,...,r,\nonumber
\end{align}
for every POV measure $\mathrm{M}_{r}.$ Summing up the left-hand and the
right-hand sides of (\ref{8}) over all $j=1,...,r$, for any POV measure
$\mathrm{M}_{r}\in\mathfrak{M}_{r},$ we also come to the following
representation for the success probability%
\begin{equation}
\mathrm{P}_{\rho_{1},...,\rho_{_{r}}|q_{1},...,q_{_{r}}}^{success}%
(\mathrm{M}_{r})=\frac{1}{r}\left(  1+\sum_{i,j=1,...,r}\mathrm{tr}\left\{
\left(  q_{i}\rho_{i}-q_{j}\rho_{j}\right)  \mathrm{M}_{r}(i)\text{ }\right\}
\right)  . \label{9}%
\end{equation}

Recall that a self-adjoint (Hermitian) bounded operator $X$ on $\mathcal{H}$
admits the decomposition:%
\begin{align}
X  &  =X^{(+)}-X^{(-)},\text{ \ \ }X^{(\pm)}\geq0,\label{10}\\
X^{(+)}  &  =\sum_{\lambda_{k}>0,}\lambda_{k}\mathrm{E}_{X}(\lambda
_{k}),\text{ \ \ }X^{(-)}=\sum_{\lambda_{k}\leq0,}\left\vert \lambda
_{k}\right\vert \mathrm{E}_{X}(\lambda_{k}),\nonumber
\end{align}
where $\mathrm{E}_{X}(\lambda_{k})$ the spectral projections of a Hermitian
operator $X$. If a bounded operator $X$ is trace class, then operators
$X^{(\pm)}\geq0$ are also trace class and
\begin{align}
\left\Vert X\right\Vert _{1}  &  :=\mathrm{tr}\left\vert X\right\vert ,\text{
\ \ }\left\vert X\right\vert =X^{(+)}+X^{(-)},\label{11}\\
\left\Vert X^{(\pm)}\right\Vert _{1}  &  =\mathrm{tr}\{X^{(\pm)}\}.\nonumber
\end{align}
From relations \cite{7} $\left\vert \mathrm{tr}\{W\}\right\vert \leq\left\Vert
W\right\Vert _{1}$ and $\left\Vert AB\right\Vert _{1}\leq\left\Vert
A\right\Vert _{1}\left\Vert B\right\Vert _{0}$, valid for all trace-class
operators $W,$ $A$ and all bounded operators $B,$ it follows that if
$X,Y$\ $\geq0$ (hence, $\mathrm{tr}\{XY\}\geq0),$ then%
\begin{equation}
0\leq\mathrm{tr}\{XY\}\leq\left\Vert X\right\Vert _{1}\left\Vert Y\right\Vert
_{0}, \label{12}%
\end{equation}
where notation $\left\Vert \cdot\right\Vert _{0}$ means the operator norm.

Definition (\ref{4}) and\ relations (\ref{8})--(\ref{12}) imply.

\begin{theorem}
[New lower bounds]For any number $r\geq2$ of arbitrary quantum states
$\rho_{1},...,\rho_{r}$ prepared with probabilities $q_{1},...,q_{r}$, the
optimal success probability (\ref{4}) admits the lower bounds%
\begin{align}
\mathrm{P}_{\rho_{1},...,\rho_{r}|q_{1},...,q_{r}}^{opt.success}  &
\geq\mathfrak{L}_{1,new}^{(r)}:=\max_{j=1,...,r}\left\{  q_{j}+\frac{1}%
{r-1}\sum_{i=1,...,r}\left\Vert \left(  q_{i}\rho_{i}-q_{j}\rho_{j}\right)
^{(+)}\right\Vert _{1}\right\} \label{15}\\
&  =\frac{1}{2(r-1)}+\frac{1}{2(r-1)}\max_{j=1,...,r}\left\{  \sum
_{i=1,...,r}\left\Vert q_{i}\rho_{i}-q_{j}\rho_{j}\right\Vert _{1}%
+q_{j}(r-2)\right\} \label{16}\\
&  \geq\mathfrak{L}_{2,new}^{(r)}:=\frac{1}{r}\left(  1+\frac{1}{r-1}%
\sum_{1\leq i<j\leq r}\left\Vert q_{i}\rho_{i}-q_{j}\rho_{j}\right\Vert
_{1}\right)  . \label{17}%
\end{align}
For $r=2,$ each of these new lower bounds reduces to the Helstrom bound in
(\ref{5}).
\end{theorem}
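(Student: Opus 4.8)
The plan is to exploit the variational characterization in (\ref{4}): since $\mathrm{P}_{\rho_{1},...,\rho_{r}|q_{1},...,q_{r}}^{opt.success}$ is the maximum of the success probability over all POV measures, it suffices to exhibit, for each fixed index $j$, one admissible POV measure whose success probability equals the $j$-th term inside the maximum in (\ref{15}); taking the maximum over $j$ then yields (\ref{15}). Starting from the $j$-th representation (\ref{8}), I would try the explicit choice $\mathrm{M}_{r}(i)=\frac{1}{r-1}\Pi_{ij}^{(+)}$ for $i\neq j$, where $\Pi_{ij}^{(+)}$ is the spectral projection of $X_{ij}:=q_{i}\rho_{i}-q_{j}\rho_{j}$ onto its strictly positive eigenspace, so that $\mathrm{tr}\{X_{ij}\Pi_{ij}^{(+)}\}=\mathrm{tr}\{X_{ij}^{(+)}\}=\|X_{ij}^{(+)}\|_{1}$, and complete it by $\mathrm{M}_{r}(j)=\mathbb{I}_{\mathcal{H}}-\frac{1}{r-1}\sum_{i\neq j}\Pi_{ij}^{(+)}$. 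Substituting into (\ref{8}) and using that the $i=j$ term vanishes (since $X_{jj}=0$) gives exactly $q_{j}+\frac{1}{r-1}\sum_{i}\|(q_{i}\rho_{i}-q_{j}\rho_{j})^{(+)}\|_{1}$.

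The main obstacle is verifying that this candidate is a genuine POV measure, i.e. that every element is positive semidefinite. The elements with $i\neq j$ are nonnegative by construction. For the completing element I would use that each $\Pi_{ij}^{(+)}$ is an orthogonal projection, hence $\Pi_{ij}^{(+)}\leq\mathbb{I}_{\mathcal{H}}$, so $\frac{1}{r-1}\sum_{i\neq j}\Pi_{ij}^{(+)}\leq\frac{1}{r-1}(r-1)\mathbb{I}_{\mathcal{H}}=\mathbb{I}_{\mathcal{H}}$ and therefore $\mathrm{M}_{r}(j)\geq0$. This is where the normalizing factor $\frac{1}{r-1}$ is forced, and it is exactly what produces the $\frac{1}{r-1}$ appearing in (\ref{15}).

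The identity (\ref{16}) is then pure bookkeeping with the decomposition (\ref{10})--(\ref{11}): from $\|X\|_{1}=\mathrm{tr}\{X^{(+)}\}+\mathrm{tr}\{X^{(-)}\}$ and $\mathrm{tr}\{X\}=\mathrm{tr}\{X^{(+)}\}-\mathrm{tr}\{X^{(-)}\}$ one gets $\|X^{(+)}\|_{1}=\frac{1}{2}(\|X\|_{1}+\mathrm{tr}\{X\})$, and with $\mathrm{tr}\{X_{ij}\}=q_{i}-q_{j}$ together with $\sum_{i}(q_{i}-q_{j})=1-rq_{j}$ the sum $\sum_{i}\|(q_{i}\rho_{i}-q_{j}\rho_{j})^{(+)}\|_{1}$ rearranges into the bracketed expression in (\ref{16}) after collecting the $q_{j}$-terms. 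For the second inequality $\mathfrak{L}_{1,new}^{(r)}\geq\mathfrak{L}_{2,new}^{(r)}$ I would bound the maximum over $j$ below by the average $\frac{1}{r}\sum_{j}$, use $\frac{1}{r}\sum_{j}q_{j}=\frac{1}{r}$, and handle the remaining double sum via the antisymmetry $X_{ji}=-X_{ij}$, which gives $(X_{ji})^{(+)}=X_{ij}^{(-)}$ and hence $\|(q_{i}\rho_{i}-q_{j}\rho_{j})^{(+)}\|_{1}+\|(q_{j}\rho_{j}-q_{i}\rho_{i})^{(+)}\|_{1}=\|q_{i}\rho_{i}-q_{j}\rho_{j}\|_{1}$; pairing $(i,j)$ with $(j,i)$ collapses $\sum_{i,j}\|(q_{i}\rho_{i}-q_{j}\rho_{j})^{(+)}\|_{1}$ to $\sum_{i<j}\|q_{i}\rho_{i}-q_{j}\rho_{j}\|_{1}$, producing $\mathfrak{L}_{2,new}^{(r)}$ exactly.

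Finally, for $r=2$ the factor $q_{j}(r-2)$ in (\ref{16}) vanishes and $\frac{1}{2(r-1)}=\frac{1}{2}$, while for either $j$ the sum $\sum_{i}\|q_{i}\rho_{i}-q_{j}\rho_{j}\|_{1}$ reduces to the single nonzero term $\|q_{1}\rho_{1}-q_{2}\rho_{2}\|_{1}$; thus $\mathfrak{L}_{1,new}^{(2)}=\frac{1}{2}(1+\|q_{1}\rho_{1}-q_{2}\rho_{2}\|_{1})$, and $\mathfrak{L}_{2,new}^{(2)}$ gives the same value since the only pair is $(1,2)$. Both therefore coincide with the Helstrom bound (\ref{5}), re-deriving (\ref{6}). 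I expect essentially no difficulty beyond the positivity check in the second paragraph.
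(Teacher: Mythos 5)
Your proposal is correct and follows essentially the same route as the paper: the same POV measures $\mathrm{M}_{r}^{(j)}(i)=\frac{1}{r-1}\mathrm{P}_{ij}^{(+)}$ for $i\neq j$ with the same positivity check on the completing element, the same trace identity $\left\Vert X^{(+)}\right\Vert _{1}=\frac{1}{2}\left(\left\Vert X\right\Vert _{1}+\mathrm{tr}\{X\}\right)$ to obtain (\ref{16}), and the same max-versus-average argument combined with $\left(q_{i}\rho_{i}-q_{j}\rho_{j}\right)^{(+)}=\left(q_{j}\rho_{j}-q_{i}\rho_{i}\right)^{(-)}$ to collapse the double sum into (\ref{17}). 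No gaps.
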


\begin{proof}
Let $\mathrm{E}_{(q_{i}\rho_{i}-q_{j}\rho_{j})}(\lambda_{k})$ be the spectral
projections of the Hermitian operator $(q_{i}\rho_{i}-q_{j}\rho_{j})$ on
$\mathcal{H}$ and
\begin{equation}
\mathrm{P}_{ij}^{(+)}:=\sum_{\lambda_{k}>0}\mathrm{E}_{(q_{i}\rho_{i}%
-q_{j}\rho_{j})}(\lambda_{k}),\text{ \ \ }i\neq j, \label{18}%
\end{equation}
denote the orthogonal projection on the proper subspace of operator
$(q_{i}\rho_{i}-q_{j}\rho_{j}),$ corresponding to its positive eigenvalues.
Note that by (\ref{11}):%
\begin{align}
q_{i}-q_{j}  &  =\left\Vert \left(  q_{i}\rho_{i}-q_{j}\rho_{j}\right)
^{(+)}\right\Vert _{1}-\left\Vert \left(  q_{i}\rho_{i}-q_{j}\rho_{j}\right)
^{(-)}\right\Vert _{1},\label{20.1}\\
\left\Vert q_{i}\rho_{i}-q_{j}\rho_{j}\right\Vert _{1}  &  =\left\Vert \left(
q_{i}\rho_{i}-q_{j}\rho_{j}\right)  ^{(+)}\right\Vert _{1}+\left\Vert \left(
q_{i}\rho_{i}-q_{j}\rho_{j}\right)  ^{(-)}\right\Vert _{1}.\nonumber
\end{align}
Introduce the POV measures $\mathrm{M}_{r}^{(j)},$ $j=1,....,r,$ each with the
elements%
\begin{align}
\mathrm{M}_{r}^{(j)}(i)  &  =\frac{1}{r-1}\mathrm{P}_{ij}^{(+)},\text{
\ \ }\ i\neq j.\label{19}\\
\mathrm{M}_{r}^{(j)}(j)  &  =\mathbb{I}_{\mathcal{H}}-\frac{1}{r-1}%
\sum_{i=1,...,r,\text{ }i\neq j}\mathrm{P}_{ij}^{(+)}\nonumber\\
&  =\frac{1}{r-1}\sum_{i=1,...,r,\text{ }i\neq j}(\mathbb{I}_{\mathcal{H}%
}-\mathrm{P}_{ij}^{(+)})\geq0.\nonumber
\end{align}
From the $j$-th representation in (\ref{8}) and relations (\ref{20.1}) it
follows that, for the $j$-th POV measure (\ref{19}), we have:%
\begin{align}
\mathrm{P}_{\rho_{1},...,\rho_{r}|q_{1},...,q_{r}}^{success}(\mathrm{M}%
_{r}^{(j)})  &  =q_{j}+\sum_{i=1,...,N}\mathrm{tr}\{\left(  q_{i}\rho
_{i}-q_{j}\rho_{j}\right)  \mathrm{M}_{r}^{(j)}(i)\}\label{20}\\
&  =q_{j}+\frac{1}{r-1}\sum_{i=1,...,r}\left\Vert \left(  q_{i}\rho_{i}%
-q_{j}\rho_{j}\right)  ^{(+)}\right\Vert _{1},\nonumber\\
j  &  =1,...,r.\nonumber
\end{align}
For the optimal success probability (\ref{4}), equalities (\ref{20}) imply%
\begin{align}
\mathrm{P}_{\rho_{1},...,\rho_{r}|q_{1},...,q_{r}}^{opt.success}  &  \geq
q_{j}+\frac{1}{r-1}\sum_{i=1,...,r}\left\Vert \left(  q_{i}\rho_{i}-q_{j}%
\rho_{j}\right)  ^{(+)}\right\Vert _{1},\label{21}\\
\forall j  &  =1,...,r,\nonumber
\end{align}
hence,
\begin{equation}
\mathrm{P}_{\rho_{1},...,\rho_{r}|q_{1},...,q_{r}}^{opt.success}\geq
\max_{j=1,...,r}\left\{  q_{j}+\frac{1}{r-1}\sum_{i=1,...,r}\left\Vert \left(
q_{i}\rho_{i}-q_{j}\rho_{j}\right)  ^{(+)}\right\Vert _{1}\right\}  .
\label{22}%
\end{equation}
Since by (\ref{20.1})
\begin{equation}
\left\Vert \left(  q_{i}\rho_{i}-q_{j}\rho_{j}\right)  ^{(+)}\right\Vert
_{1}=\frac{1}{2}(q_{i}-q_{j})+\frac{1}{2}\left\Vert q_{i}\rho_{i}-q_{j}%
\rho_{j}\right\Vert _{1}, \label{23}%
\end{equation}
the expression in the right-hand side of (\ref{21}) is otherwise equal to%
\begin{align}
&  q_{j}+\frac{1}{r-1}\sum_{i=1,...,r}\left\Vert \left(  q_{i}\rho_{i}%
-q_{j}\rho_{j}\right)  ^{(+)}\right\Vert _{1}\label{24}\\
&  =\frac{1}{2(r-1)}+\frac{1}{2(r-1)}\left\{  \sum_{i=1,...,r}\left\Vert
q_{i}\rho_{i}-q_{j}\rho_{j}\right\Vert _{1}+q_{j}(r-2)\right\}  .\nonumber
\end{align}
This and relation (\ref{22}) imply the lower bounds (\ref{15}) and (\ref{16}).
Summing up the left-hand and the right-hand sides of (\ref{21}) over all
$j=1,...,r$ and taking into account $\sum_{j=1,...,r}q_{j}=1,$ relations
(\ref{20.1}) and
\begin{equation}
\left(  q_{i}\rho_{i}-q_{j}\rho_{j}\right)  ^{+}=\left(  q_{j}\rho_{j}%
-q_{i}\rho_{i}\right)  ^{(-)}, \label{25}%
\end{equation}
we derive%
\begin{equation}
\mathrm{P}_{\rho_{1},...,\rho_{r}|q_{1},...,q_{r}}^{opt.success}\geq\frac
{1}{r}\left(  1+\frac{1}{r-1}\sum_{1\leq i<j\leq r}\left\Vert q_{i}\rho
_{i}-q_{j}\rho_{j}\right\Vert _{1}\right)  , \label{26}%
\end{equation}
that is, the lower bound (\ref{17}). Furthermore, since, for any positive
numbers $\alpha_{j},$ $j=1,...,r,$ their sum
\begin{equation}
\sum_{j=1,...,r}\alpha_{j}\leq r\max_{j=1,...,r}\alpha_{j}, \label{27}%
\end{equation}
we have%
\begin{align}
&  \max_{j=1,...,r}\left\{  q_{j}+\frac{1}{r-1}\sum_{i=1,...,N}\left\Vert
\left(  q_{i}\rho_{i}-q_{j}\rho_{j}\right)  ^{(+)}\right\Vert _{1}\right\}
\label{28}\\
&  \geq\frac{1}{r}\left(  1+\frac{1}{r-1}\sum_{1\leq i<j\leq r}\left\Vert
q_{i}\rho_{i}-q_{j}\rho_{j}\right\Vert _{1}\right)  .\nonumber
\end{align}
Relations (\ref{22}), (\ref{24}) and (\ref{28}) prove the statement of Theorem 2.
\end{proof}

Consider now the relation of the new lower bounds (\ref{15})--(\ref{17}) to
the known general lower bounds on $\mathrm{P}_{\rho_{1},...,\rho_{r}%
|q_{1},...,q_{r}}^{opt.success}:$%
\begin{align}
\mathfrak{L}_{1}^{(r)}  &  :=\max_{j=1,...,r}q_{j}\geq\frac{1}{r}%
,\label{28.1}\\
\mathfrak{L}_{2}^{(r)}  &  :=1-\sum_{1\leq i<j\leq r}\sqrt{q_{i}q_{j}}%
F_{ij},\label{28.2}\\
\mathfrak{L}_{3}^{(r)}  &  :=\left(  \mathrm{tr}\left[  \sqrt{%
{\textstyle\sum\limits_{i=1,...,r}}
q_{i}^{2}\rho_{i}^{2}}\right]  \right)  ^{2}, \label{28.3}%
\end{align}
where (a) bound (\ref{28.1}) follows from item (ii) and relation (\ref{5.1})
in Theorem 1; (b) bound (\ref{28.2}) was introduced by Barnum\&Knill in
\cite{9} and further studied by Audenaert\&Mosonyi in \cite{19}; (c) bound
(\ref{28.3}) was introduced by Tyson in \cite{18}. Here, $F_{ij}:=\left\Vert
\sqrt{\rho_{i}}\sqrt{\rho_{j}}\right\Vert _{1}$ is the pairwise fidelity.

\begin{proposition}
(i) The new lower bound (\ref{15}) is tighter%
\begin{equation}
\mathfrak{L}_{1,new}^{(r)}\geq\mathfrak{L}_{1}^{(r)} \label{29.1}%
\end{equation}
than the known lower bound (\ref{28.1}) for any number $r\geq2$ of arbitrary
states $\rho_{1},...,\rho_{r}$ and probabilities $q_{1},...,q_{r}$.
\newline(ii) For $r=2,$ the new lower bounds (\ref{15}), (\ref{17}) are equal
to the Helstrom bound and $\mathfrak{L}_{1,new}^{(2)}=\mathfrak{L}%
_{2,new}^{(2)}\geq\mathfrak{L}_{2}^{(2)}$ for all states $\rho_{1},\rho_{2}$
and probabilities $q_{1},q_{2}.$ \newline(iii) For any $r>2,$ the new lower
bound (\ref{17}) is tighter than the lower bound (\ref{28.2})%
\begin{equation}
\mathfrak{L}_{2,new}^{(r)}\geq\mathfrak{L}_{2}^{(r)} \label{29.3}%
\end{equation}
if%
\begin{equation}
\sum_{1\leq i<j\leq r}\left\Vert q_{i}\rho_{i}-q_{j}\rho_{j}\right\Vert
_{1}\leq\frac{(r-1)^{2}}{r+1} \label{29.4}%
\end{equation}
and otherwise if\
\begin{equation}
\frac{(r-1)^{2}}{r+1}<\sum_{1\leq i<j\leq r}\left\Vert q_{i}\rho_{i}-q_{j}%
\rho_{j}\right\Vert _{1}\leq r-1. \label{29.5}%
\end{equation}

\end{proposition}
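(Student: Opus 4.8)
The plan is to handle the three items in turn; items (i) and (ii) are essentially computational, while item (iii) rests on a single pairwise inequality linking the trace-norm data of (\ref{17}) to the fidelity data of (\ref{28.2}).

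For item (i) I would argue straight from the definition (\ref{15}). For every fixed $j$ the bracketed quantity satisfies
\[
q_{j}+\frac{1}{r-1}\sum_{i=1,...,r}\Vert(q_{i}\rho_{i}-q_{j}\rho_{j})^{(+)}\Vert_{1}\ \geq\ q_{j},
\]
because each summand $\Vert(\cdot)^{(+)}\Vert_{1}$ is nonnegative. Taking the maximum over $j$ on both sides turns the right-hand side into $\max_{j}q_{j}=\mathfrak{L}_{1}^{(r)}$ and the left-hand side into $\mathfrak{L}_{1,new}^{(r)}$, which is exactly (\ref{29.1}); nothing further is required.

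For item (ii) I would first set $r=2$ (so $r-1=1$) in (\ref{15}) and use identity (\ref{23}): the $j=1$ term equals $q_{1}+\tfrac12(q_{2}-q_{1})+\tfrac12\Vert q_{1}\rho_{1}-q_{2}\rho_{2}\Vert_{1}=\tfrac12(1+\Vert q_{1}\rho_{1}-q_{2}\rho_{2}\Vert_{1})$, and by symmetry so does the $j=2$ term, while (\ref{17}) returns the same value; hence $\mathfrak{L}_{1,new}^{(2)}=\mathfrak{L}_{2,new}^{(2)}$ and both coincide with the Helstrom bound (\ref{5}). The inequality $\mathfrak{L}_{1,new}^{(2)}\geq\mathfrak{L}_{2}^{(2)}$ then follows immediately: by (\ref{6}) the Helstrom bound is the \emph{exact} optimal success probability for $r=2$, whereas $\mathfrak{L}_{2}^{(2)}$ is by construction a lower bound on that same optimal probability. (Equivalently, it is the $i,j=1,2$ instance of the pairwise estimate used below, so item (ii) can be made self-contained.)

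For item (iii) the key tool is the weighted pairwise Fuchs--van de Graaf inequality $\tfrac12\Vert q_{i}\rho_{i}-q_{j}\rho_{j}\Vert_{1}\geq\tfrac12(q_{i}+q_{j})-\sqrt{q_{i}q_{j}}\,F_{ij}$ for $i\neq j$. I would derive it by applying the two-state facts to the sub-problem of discriminating $\rho_{i},\rho_{j}$ with renormalized priors $q_{i}/(q_{i}+q_{j})$, $q_{j}/(q_{i}+q_{j})$: the Helstrom value (\ref{6}) of that sub-problem is its optimal success probability, which dominates the corresponding Barnum--Knill bound (\ref{28.2}), and clearing the factor $(q_{i}+q_{j})$ yields the estimate. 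Writing $S:=\sum_{1\leq i<j\leq r}\Vert q_{i}\rho_{i}-q_{j}\rho_{j}\Vert_{1}$ and summing over all pairs, together with $\sum_{i<j}(q_{i}+q_{j})=(r-1)\sum_{i}q_{i}=r-1$, gives
\[
\sum_{1\leq i<j\leq r}\sqrt{q_{i}q_{j}}\,F_{ij}\ \geq\ \tfrac12(r-1)-\tfrac12 S .
\]
Feeding this into $\mathfrak{L}_{2}^{(r)}=1-\sum_{i<j}\sqrt{q_{i}q_{j}}F_{ij}$ and comparing with $\mathfrak{L}_{2,new}^{(r)}=\tfrac1r+\tfrac{1}{r(r-1)}S$, the desired (\ref{29.3}) reduces, after clearing the positive factor $2r(r-1)$ and cancelling $r-2>0$, to the scalar condition $(r+1)S\leq(r-1)^{2}$, i.e.\ precisely (\ref{29.4}).

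I expect the substantive step to be establishing the pairwise Fuchs--van de Graaf bound (equivalently, recognizing the two-state reduction that makes it follow from (\ref{6}) and (\ref{28.2})); the reduction to the threshold $(r-1)^{2}/(r+1)$ is then only careful bookkeeping of the terms linear in $S$. The second mild point is that this argument is one-directional: since $\Vert q_{i}\rho_{i}-q_{j}\rho_{j}\Vert_{1}\leq q_{i}+q_{j}$ forces $S\leq r-1$, the two ranges (\ref{29.4}) and (\ref{29.5}) exhaust all admissible values of $S$, and (\ref{29.4}) is exactly where the summed estimate guarantees (\ref{29.3}); on the complementary range (\ref{29.5}) the lower bound on $\sum_{i<j}\sqrt{q_{i}q_{j}}F_{ij}$ no longer forces the comparison, and the known bound (\ref{28.2}) may instead be the tighter one, which is what the ``otherwise'' clause records.
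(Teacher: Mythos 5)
Your proposal is correct and follows essentially the same route as the paper: part (i) is immediate from nonnegativity, and part (iii) rests on the pairwise estimate $\sqrt{q_{i}q_{j}}F_{ij}\geq\tfrac12(q_{i}+q_{j})-\tfrac12\Vert q_{i}\rho_{i}-q_{j}\rho_{j}\Vert_{1}$ summed over pairs, yielding the same factorized difference $\tfrac{(r-1)(r-2)}{2r}\bigl(1-\tfrac{r+1}{(r-1)^{2}}S\bigr)$ and the threshold $(r-1)^{2}/(r+1)$, with $S\leq r-1$ covering the complementary range. The only cosmetic difference is that the paper simply cites this pairwise estimate as Lemma 5 of Qiu and Li, whereas you re-derive it from the two-state Helstrom formula dominating the two-state Barnum--Knill bound, which is a valid and self-contained justification.
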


\begin{proof}
Due to the structure of the new lower bound (\ref{15}), relation (\ref{29.1})
is obvious. In order to prove (ii) and (iii), consider the difference%
\begin{equation}
\mathfrak{L}_{2,new}^{(r)}-\mathfrak{L}_{2}^{(r)}=\frac{1}{r}+\frac{1}%
{r(r-1)}\sum_{1\leq i<j\leq r}\left\Vert q_{i}\rho_{i}-q_{j}\rho
_{j}\right\Vert _{1}-1+\sum_{1\leq i<j\leq r}\sqrt{q_{i}q_{j}}F(\rho_{i}%
,\rho_{j}). \label{29.6}%
\end{equation}
By Lemma 5 in \cite{16}
\begin{equation}
\sqrt{q_{i}q_{j}}F(\rho_{i},\rho_{j})\geq\frac{1}{2}(q_{i}+q_{j})-\frac{1}%
{2}\left\Vert q_{i}\rho_{i}-q_{j}\rho_{j}\right\Vert _{1}. \label{29.7}%
\end{equation}
Substituting this relation into (\ref{29.6}) and taking into account
\begin{equation}
\sum_{1\leq i<j\leq r}(q_{i}+q_{j})=r-1, \label{29.8}%
\end{equation}
we derive%
\begin{align}
\mathfrak{L}_{2,new}^{(r)}-\mathfrak{L}_{2}^{(r)}  &  \geq\frac{(r-2)(r-1)}%
{2r}-\left(  \frac{1}{2}-\frac{1}{r(r-1)}\right)  \sum_{1\leq i<j\leq
r}\left\Vert q_{i}\rho_{i}-q_{j}\rho_{j}\right\Vert _{1}\label{29.9}\\
&  =\frac{(r-2)(r-1)}{2r}\left(  1-\frac{r+1}{(r-1)^{2}}\sum_{1\leq i<j\leq
r}\left\Vert q_{i}\rho_{i}-q_{j}\rho_{j}\right\Vert _{1}\right)  .\nonumber
\end{align}
This proves relations (\ref{29.3}), (\ref{29.4}), also, the left-hand side
inequality of (\ref{29.5}). The right-hand side inequality of (\ref{29.5})
follows from the relation%
\begin{equation}
\sum_{1\leq i<j\leq r}\left\Vert q_{i}\rho_{i}-q_{j}\rho_{j}\right\Vert
_{1}\leq\sum_{1\leq i<j\leq r}(q_{i}+q_{j})=r-1, \label{29.10}%
\end{equation}
where we took into account (\ref{29.8}).
\end{proof}

\section{New upper bound}

From relations (\ref{10})--(\ref{12}) and inequality $\left\Vert
\mathrm{M}_{r}(i)\right\Vert _{0}\leq1$ it follows that, for every POV measure
$\mathrm{M}_{r},$ in representation (\ref{8})%
\begin{align}
\mathrm{P}_{\rho_{1},...,\rho_{_{r}}|q_{1},...,q_{_{r}}}^{success}%
(\mathrm{M}_{r})  &  =q_{k}+\sum_{i=1,...,r}\mathrm{tr}\{(q_{i}\rho_{i}%
-q_{k}\rho_{k})\mathrm{M}_{r}(i)\}\label{13}\\
&  \leq q_{m}+\sum_{i=1,...,r}\left\Vert \left(  q_{i}\rho_{i}-q_{m}\rho
_{m}\right)  ^{(+)}\right\Vert _{1},\text{ \ \ \ }k,m=1,...,r,\nonumber
\end{align}
and in representation (\ref{9})
\begin{align}
\mathrm{P}_{\rho_{1},...,\rho_{_{_{r}}}|q_{1},...,q_{_{r}}}^{success}%
(\mathrm{M}_{r})  &  =\frac{1}{r}\left(  1+\sum_{i,j=1,.,,,r}\mathrm{tr}%
\left\{  \text{ }\left(  q_{i}\rho_{i}-q_{j}\rho_{j}\right)  \mathrm{M}%
_{r}(i)\right\}  \right) \label{14}\\
&  \leq\frac{1}{r}\left(  1+\sum_{i,j=1,...,r}\left\Vert \left(  q_{i}\rho
_{i}-q_{j}\rho_{j}\right)  ^{(+)}\right\Vert _{1}\right)  .\nonumber
\end{align}

Relation (\ref{13}) immediately implies the upper bound%
\begin{align}
\mathrm{P}_{\rho_{1},...,\rho_{_{r}}|q_{1},...,q_{_{r}}}^{opt.success}  &
\leq Q_{4}^{(r)}:=\min_{j=1,...,r}\left\{  q_{j}+\sum_{i=1,...,r}\left\Vert
\left(  q_{i}\rho_{i}-q_{j}\rho_{j}\right)  ^{(+)}\right\Vert _{1}\right\}
\label{30}\\
&  =\frac{1}{2}+\frac{1}{2}\min_{j=1,...,r}\left\{  \sum_{i=1,...,r}\left\Vert
q_{i}\rho_{i}-q_{j}\rho_{j}\right\Vert _{1}-q_{j}(r-2)\right\}  ,\nonumber
\end{align}
which agrees due to the relation $\mathrm{P}_{\rho_{1},...,\rho_{_{r}}%
|q_{1},...,q_{_{r}}}^{opt.success}=1-$ $\mathrm{P}_{\rho_{1},...,\rho_{_{r}%
}|q_{1},...,q_{_{r}}}^{opt.error}$ with the lower bound $L_{4}$ by Qiu\&Li on
$\mathrm{P}_{\rho_{1},...,\rho_{_{r}}|q_{1},...,q_{_{r}}}^{opt.error}$
introduced in \cite{16}. Here, in order to derive the expression in the second
line of (\ref{30}) we took into account relation (\ref{23}).

For convenience in comparison, we take for the upper bound (\ref{30}) and the
below upper bounds (\ref{40} )-(\ref{42}) on the optimal success probability
the numeration similar to that for the lower bounds $L_{n}^{(r)}$ in \cite{16}
on the optimal error probability with the obvious correspondence $Q_{n}%
^{(r)}=1-L_{n}^{(r)}.\medskip$

The following theorem introduces a new upper bound on the optimal success
probability $\mathrm{P}_{\rho_{1},...,\rho_{_{r}}|q_{1},...,q_{_{r}}%
}^{opt.success}$and establishes its relation to the upper bound (\ref{30}).

\begin{theorem}
[New upper bound]For any number $r\geq2$ of arbitrary quantum states $\rho
_{1},...,\rho_{r_{r}}$ prepared with probabilities $q_{1},...,q_{_{r}},$ the
optimal success probability (\ref{4}) admits the upper bound%
\begin{equation}
\mathrm{P}_{\rho_{1},...,\rho_{r}|q_{1},...,q_{r}}^{opt.success}\leq
Q_{new}^{(r)}:=\frac{1}{r}\left(  1+\sum_{1\leq i<j\leq r}\left\Vert q_{i}%
\rho_{i}-q_{j}\rho_{j}\right\Vert _{1}\right)  , \label{32}%
\end{equation}
explicitly generalizing to $r>2$ the form of the Helstrom upper bound in
(\ref{5}) and relating to the upper bound (\ref{30}) as%
\begin{equation}
Q_{4}^{(r)}\leq Q_{new}^{(r)}. \label{33}%
\end{equation}

\end{theorem}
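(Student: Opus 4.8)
The plan is to establish the two claims of Theorem 3 separately: first the upper bound \eqref{32} itself, and then its comparison \eqref{33} with $Q_4^{(r)}$.

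For the bound \eqref{32}, I would start directly from representation \eqref{14}, which holds for every POV measure $\mathrm{M}_r$ and already gives
\[
\mathrm{P}_{\rho_1,\dots,\rho_r|q_1,\dots,q_r}^{success}(\mathrm{M}_r)\leq\frac{1}{r}\Bigl(1+\sum_{i,j=1,\dots,r}\bigl\Vert(q_i\rho_i-q_j\rho_j)^{(+)}\bigr\Vert_1\Bigr).
\]
Taking the maximum over $\mathrm{M}_r\in\mathfrak{M}_r$ on the left transfers this to $\mathrm{P}^{opt.success}$. The key simplification is to collapse the double sum over the full index set $\{1,\dots,r\}\times\{1,\dots,r\}$ into the sum over pairs $1\leq i<j\leq r$ of trace norms. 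This uses the symmetry relation \eqref{25}, namely $(q_i\rho_i-q_j\rho_j)^{(+)}=(q_j\rho_j-q_i\rho_i)^{(-)}$, together with the decomposition \eqref{20.1}: for each unordered pair $\{i,j\}$ the two ordered contributions add up to $\Vert(q_i\rho_i-q_j\rho_j)^{(+)}\Vert_1+\Vert(q_i\rho_i-q_j\rho_j)^{(-)}\Vert_1=\Vert q_i\rho_i-q_j\rho_j\Vert_1$, while the diagonal terms $i=j$ vanish. Summing over all $\binom{r}{2}$ pairs then yields exactly $\sum_{1\leq i<j\leq r}\Vert q_i\rho_i-q_j\rho_j\Vert_1$, which produces \eqref{32}. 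The reduction to the Helstrom form for $r=2$ is then immediate, since the single pair gives $\frac12(1+\Vert q_1\rho_1-q_2\rho_2\Vert_1)$.

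For the comparison \eqref{33}, I would rewrite both bounds using the closed forms already available. The bound $Q_4^{(r)}$ is given in the second line of \eqref{30} as $\tfrac12+\tfrac12\min_j\{\sum_i\Vert q_i\rho_i-q_j\rho_j\Vert_1-q_j(r-2)\}$, and I would similarly express $Q_{new}^{(r)}$ through the half-sum identity \eqref{23}. The cleanest route is to exploit the averaging inequality: $Q_{new}^{(r)}$ is, up to normalization, the average over $j$ of the quantities whose minimum defines $Q_4^{(r)}$, and the minimum of a family is at most its average. Concretely, using \eqref{23} one checks that $\tfrac1r\sum_{j=1}^r\bigl\{q_j+\sum_{i=1}^r\Vert(q_i\rho_i-q_j\rho_j)^{(+)}\Vert_1\bigr\}$ equals $Q_{new}^{(r)}$ by the same pair-collapsing computation, and since $Q_4^{(r)}=\min_j\{\cdots\}\leq\tfrac1r\sum_j\{\cdots\}$, the inequality \eqref{33} follows.

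The main obstacle is the bookkeeping in the pair-collapsing step: one must be careful that the double sum in \eqref{14} runs over all ordered pairs including both $(i,j)$ and $(j,i)$, and that relation \eqref{25} correctly pairs the positive part of one ordering with the negative part of the other so that their trace norms combine into the full trace norm $\Vert q_i\rho_i-q_j\rho_j\Vert_1$. Once this combinatorial identity is set up cleanly, both \eqref{32} and \eqref{33} drop out with essentially no further work, since \eqref{33} is just the elementary fact that a minimum over $j$ is dominated by the average over $j$.
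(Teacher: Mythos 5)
Your proposal is correct and follows essentially the same route as the paper: the bound \eqref{32} is obtained from representation \eqref{14} by collapsing the ordered double sum into unordered pairs via \eqref{25} and \eqref{20.1}, and the comparison \eqref{33} is exactly the paper's argument that $Q_{new}^{(r)}$ is the average over $j$ of the quantities whose minimum defines $Q_4^{(r)}$ (the paper phrases this by summing \eqref{34} over $j$ and invoking $\sum_j\alpha_j\geq r\min_j\alpha_j$). No gaps.
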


\begin{proof}
In view of (\ref{4}), relations (\ref{14}) and (\ref{25}) immediately imply
the upper bound (\ref{32}). Also, by (\ref{13})
\begin{equation}
\mathrm{P}_{\rho_{1},...,\rho_{r}|q_{1},...,q_{r}}^{opt.success}\leq
q_{j}+\sum_{i=1,...,r}\left\Vert \left(  q_{i}\rho_{i}-q_{j}\rho_{j}\right)
^{(+)}\right\Vert _{1},\text{ \ \ \ }j=1,...,r. \label{34}%
\end{equation}
Summing up the left-hand and the right-hand sides of this relation over
$j=1,...,r$, and taking into account (\ref{25}), we derive%
\begin{align}
r\mathrm{P}_{\rho_{1},...,\rho_{r}|q_{1},...,q_{r}}^{opt.success}  &  \leq
\sum_{j=1,...,r}q_{j}+\sum_{i=1,...,r}\left\Vert \left(  q_{i}\rho_{i}%
-q_{j}\rho_{j}\right)  ^{(+)}\right\Vert _{1}\label{35}\\
&  =1+\sum_{1\leq i<j\leq r}\left\Vert q_{i}\rho_{i}-q_{j}\rho_{j}\right\Vert
_{1}.\nonumber
\end{align}
Since, for any positive numbers $\alpha_{j},$ $j=1,...,r,$ their sum
\begin{equation}
\sum_{j=1,...,r}\alpha_{j}\geq r\min_{j=1,...,r}\alpha_{j}, \label{36}%
\end{equation}
we have
\begin{align}
&  \min_{j=1,...,r}\left\{  q_{j}+\sum_{i}\left\Vert \left(  q_{i}\rho
_{i}-q_{j}\rho_{j}\right)  ^{(+)}\right\Vert _{1}\right\} \label{37}\\
&  \leq\frac{1}{r}\left(  1+\sum_{1\leq i<j\leq r}\left\Vert q_{i}\rho
_{i}-q_{j}\rho_{j}\right\Vert _{1}\right)  ,\nonumber
\end{align}
that is, relation (\ref{33}). This proves the statement of Theorem 3.
\end{proof}

We stress that if $r$ is rather large, then the calculation of bound
(\ref{32}) is easier than finding the maximum in bound (\ref{30}).

\begin{remark}
From relations (\ref{23}) and $\left\Vert q_{i}\rho_{i}-q_{j}\rho
_{j}\right\Vert _{1}\leq q_{i}+q_{j},$ it follows
$\vert$%
$\vert$%
$\left(  q_{i}\rho_{i}-q_{j}\rho_{j}\right)  ^{(+)}||_{1}\leq q_{i},$
therefore,%
\begin{equation}
\sum_{i=1,...,r}\left\Vert \left(  q_{i}\rho_{i}-q_{j}\rho_{j}\right)
^{(+)}\right\Vert _{1}\leq\sum_{i=1,...,r,\text{ }i\neq j}q_{i}=1-q_{j}.
\label{38_}%
\end{equation}
Relations (\ref{25}), (\ref{29.8}) and (\ref{38_}) imply
\begin{equation}
\sum_{1\leq i<j\leq r}\left\Vert q_{i}\rho_{i}-q_{j}\rho_{j}\right\Vert
_{1}=\sum_{i,j=1,...,r,\text{ }}\left\Vert \left(  q_{i}\rho_{i}-q_{j}\rho
_{j}\right)  ^{(+)}\right\Vert _{1}\leq r-1, \label{38}%
\end{equation}
which agrees with (\ref{29.10}). Therefore, the new upper bound in Theorem 3
is nontrivial (i. e. not more than one). Also, by relation (\ref{13})
specified for the POV measure (\ref{19}), the bounds in Theorem 2 and the
upper bound (\ref{30}) are consistent in the sense
\begin{align}
&  \max_{k=1,...,r}\left\{  q_{k}+\frac{1}{r-1}\sum_{i=1,...,r}\left\Vert
\left(  q_{i}\rho_{i}-q_{k}\rho_{k}\right)  ^{(+)}\right\Vert _{1}\right\}
\label{39}\\
&  \leq\min_{m=1,...,r}\left\{  q_{m}+\sum_{i=1,...,r}\left\Vert \left(
q_{i}\rho_{i}-q_{m}\rho_{m}\right)  ^{(+)}\right\Vert _{1}\right\} \nonumber
\end{align}
for all $r\geq2.$
\end{remark}

Besides relation (\ref{33}) of the new upper bound (\ref{32}) to the upper
bound (\ref{30}) introduced in \cite{16}, consider also its relation to the
general upper bounds on $\mathrm{P}_{\rho_{1},...,\rho_{r}|q_{1},...,q_{r}%
}^{opt.success}$:%
\begin{align}
Q_{2}^{(r)}  &  :=\frac{1}{2}\left(  1+\frac{1}{r-1}\sum_{1\leq i<j\leq
r}\left\Vert q_{i}\rho_{i}-q_{j}\rho_{j}\right\Vert _{1}\right)  ,\label{40}\\
Q_{3}^{(r)}  &  :=1-\sum_{1\leq i<j\leq r}q_{i}q_{j}F_{ij}^{2},\label{41}\\
Q_{5}^{(r)}  &  :=\mathrm{tr}\left[  \sqrt{%
{\textstyle\sum\limits_{i=1,...,r}}
q_{i}^{2}\rho_{i}^{2}}\right]  , \label{42}%
\end{align}
which follow from the lower bounds $L_{n}^{(r)}$ on the optimal error
probability introduced, correspondingly: (i) by Qiu in \cite{14}; (ii) by
Montanaro in \cite{15}; (iii) by Ogawa\&Nagaoka in \cite{17} for the
equiprobable case and by Tyson \cite{18} for a general case.

The detailed study of these general bounds and some other bounds for specific
families of quantum states is presented in \cite{19}.

\begin{proposition}
(i) For any number $r\geq2,$ of arbitrary quantum states $\rho_{1}%
,...,\rho_{r}$ and a priori probabilities $q_{1},...,q_{r}$, the new upper
bound (\ref{32}) and the upper bounds (\ref{40})--(\ref{41}) satisfy the
relations%
\begin{equation}
Q_{new}^{(r)}\leq Q_{2}^{(r)}\leq1-\frac{1}{N-1}(1-Q_{3}^{(r)}). \label{44}%
\end{equation}
\newline(ii) In the equiprobable case, the new upper bound (\ref{32}) is
tighter than the known upper bound (\ref{41}):
\begin{equation}
Q_{new}^{(r)}\leq Q_{3}^{(r)}. \label{44.1}%
\end{equation}

\end{proposition}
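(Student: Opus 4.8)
The plan is to split Proposition 2 into three separate scalar inequalities and reduce each one, by clearing denominators, to a single-pair estimate. Throughout I abbreviate $S:=\sum_{1\le i<j\le r}\left\Vert q_i\rho_i-q_j\rho_j\right\Vert _1$, and I read the misprinted $N$ in (\ref{44}) as $r$. For the first link $Q_{new}^{(r)}\le Q_2^{(r)}$ of part (i), I would simply insert the definitions (\ref{32}), (\ref{40}),
\[
Q_{new}^{(r)}=\tfrac1r\left(1+S\right),\qquad Q_2^{(r)}=\tfrac12\Big(1+\tfrac{1}{r-1}S\Big),
\]
so that after multiplying by $2r(r-1)>0$ the claim becomes $(r-2)\bigl(S-(r-1)\bigr)\le0$. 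For $r>2$ this is equivalent to $S\le r-1$, which is exactly relation (\ref{38}) established in the Remark; for $r=2$ the factor $(r-2)$ vanishes and the two bounds coincide with the Helstrom value. This disposes of the first inequality.

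For the second link $Q_2^{(r)}\le1-\tfrac{1}{r-1}(1-Q_3^{(r)})$ I use $1-Q_3^{(r)}=\sum_{1\le i<j\le r}q_iq_jF_{ij}^2$ from (\ref{41}). Inserting the definition of $Q_2^{(r)}$ and multiplying by $2(r-1)$, the inequality becomes
\[
S+2\sum_{1\le i<j\le r}q_iq_jF_{ij}^2\le r-1=\sum_{1\le i<j\le r}(q_i+q_j),
\]
the last equality being (\ref{29.8}). Hence it suffices to prove the single-pair estimate $\left\Vert q_i\rho_i-q_j\rho_j\right\Vert _1+2q_iq_jF_{ij}^2\le q_i+q_j$. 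The tool I would use is the subnormalized Fuchs--van de Graaf upper bound: for positive trace-class $A,B$ one has $\left\Vert A-B\right\Vert _1\le\sqrt{(\mathrm{tr}A+\mathrm{tr}B)^2-4\,F(A,B)^2}$ with $F(A,B)=\left\Vert \sqrt A\sqrt B\right\Vert _1$. I would establish this by Uhlmann purification: pick purifications $\psi,\phi$ of $A,B$ of squared norms $\mathrm{tr}A,\mathrm{tr}B$ and overlap $\langle\psi|\phi\rangle=F(A,B)$, note that the partial trace is a channel and hence contracts $\left\Vert \cdot\right\Vert _1$, and compute the trace norm of the rank-two operator $|\psi\rangle\langle\psi|-|\phi\rangle\langle\phi|$ to be exactly $\sqrt{(\mathrm{tr}A+\mathrm{tr}B)^2-4\,F(A,B)^2}$. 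Applying this with $A=q_i\rho_i$, $B=q_j\rho_j$, so that $F(A,B)=\sqrt{q_iq_j}\,F_{ij}$, and then squaring the elementary bound $\sqrt{u^2-4v}\le u-2v$ (valid here since $u:=q_i+q_j\le1\le1+v$ with $v:=q_iq_jF_{ij}^2$) yields the pairwise estimate, which summed over pairs gives the second link for every $r\ge2$.

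For part (ii) I set $q_i=1/r$, so that $\left\Vert q_i\rho_i-q_j\rho_j\right\Vert _1=\tfrac1r\left\Vert \rho_i-\rho_j\right\Vert _1$ and $q_iq_jF_{ij}^2=\tfrac1{r^2}F_{ij}^2$. Inserting these into (\ref{32}) and (\ref{41}), multiplying by $r^2$, and using that the number of pairs is $\binom r2=\tfrac{r(r-1)}2$, the inequality $Q_{new}^{(r)}\le Q_3^{(r)}$ reduces to the single-pair estimate $\left\Vert \rho_i-\rho_j\right\Vert _1+F_{ij}^2\le2$. This follows from the ordinary normalized Fuchs--van de Graaf bound $\tfrac12\left\Vert \rho_i-\rho_j\right\Vert _1\le\sqrt{1-F_{ij}^2}$ together with the elementary inequality $2\sqrt{1-x}\le2-x$ for $x=F_{ij}^2\in[0,1]$, which upon squaring is merely $0\le x^2$.

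I expect the only non-routine point to be the subnormalized Fuchs--van de Graaf bound needed for the second link of part (i): unlike in part (ii), the operators $q_i\rho_i$ are not normalized, so the textbook inequality does not apply verbatim and the purification argument with the explicit rank-two trace-norm computation (or an equivalent known lemma) must be carried out. Everything else is bookkeeping --- clearing denominators, the pair-count identity (\ref{29.8}), and the elementary scalar inequalities $\sqrt{u^2-4v}\le u-2v$ and $2\sqrt{1-x}\le2-x$.
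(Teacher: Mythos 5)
Your proposal is correct, and its skeleton coincides with the paper's: the first link of (i) is proved by exactly the same algebraic identity, $Q_{2}^{(r)}-Q_{new}^{(r)}=\frac{r-2}{2r}\bigl(1-\frac{1}{r-1}\sum_{i<j}\Vert q_{i}\rho_{i}-q_{j}\rho_{j}\Vert_{1}\bigr)$ combined with the bound $\sum_{i<j}\Vert q_{i}\rho_{i}-q_{j}\rho_{j}\Vert_{1}\leq r-1$ of relation (\ref{38}), and part (ii) is reduced, as in the paper, to the single\--pair estimate $\Vert\rho_{i}-\rho_{j}\Vert_{1}+F_{ij}^{2}\leq2$. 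Where you genuinely diverge is in what you do with the two remaining ingredients: the paper simply cites estimate (55) of Montanaro for the second inequality in (\ref{44}) and relation (20) of Qiu--Li for the pairwise bound in (ii), whereas you prove both from scratch --- the latter from the standard Fuchs--van de Graaf inequality plus $2\sqrt{1-x}\leq2-x$, and the former from a subnormalized Fuchs--van de Graaf inequality $\Vert A-B\Vert_{1}\leq\sqrt{(\mathrm{tr}A+\mathrm{tr}B)^{2}-4F(A,B)^{2}}$ that you establish by Uhlmann purification, contractivity of the trace norm under partial trace, and the exact rank-two trace-norm computation. Your purification argument is sound (the two eigenvalues of $|\psi\rangle\langle\psi|-|\phi\rangle\langle\phi|$ have product $|\langle\psi|\phi\rangle|^{2}-\Vert\psi\Vert^{2}\Vert\phi\Vert^{2}\leq0$, so the trace norm is indeed the stated square root), and the scalar step $\sqrt{u^{2}-4v}\leq u-2v$ goes through; the only point worth writing down explicitly is that the right-hand side is nonnegative, i.e. $q_{i}+q_{j}\geq2q_{i}q_{j}F_{ij}^{2}$, which holds since $F_{ij}\leq1$ and $2q_{i}q_{j}\leq q_{i}^{2}+q_{j}^{2}\leq q_{i}+q_{j}$. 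What your route buys is a self-contained proof that does not depend on looking up the cited estimates; what the paper's route buys is brevity.
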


\begin{proof}
(i) The first inequality in (\ref{44}) follows from
\begin{align}
&  Q_{2}^{(r)}-Q_{new}^{(r)}\label{46}\\
&  =\frac{r-2}{2r}\left(  1-\frac{1}{r-1}\sum_{1\leq i<j\leq r}\left\Vert
q_{i}\rho_{i}-q_{j}\rho_{j}\right\Vert _{1}\right)  \geq0,\nonumber
\end{align}
where we take into account estimate (\ref{38}). To prove the second inequality
in (\ref{44}), we use estimate (55) in \cite{15} and derive%
\begin{equation}
1-Q_{2}^{(r)}\geq\frac{1-Q_{3}^{(r)}}{r-1}. \label{47}%
\end{equation}
(ii) For the equiprobable case $q_{1}=...=q_{r}=\frac{1}{r}$, we consider the difference%

\begin{align}
Q_{new}^{(r)}-Q_{3}^{(r)}  &  =\frac{1}{r}+\frac{1}{r^{2}}\sum_{1\leq i<j\leq
r}\left\Vert \rho_{i}-\rho_{j}\right\Vert _{1}-1+\frac{1}{r^{2}}\sum_{1\leq
i<j\leq r}F^{2}(\rho_{i},\rho_{j})\label{47.1}\\
&  =\frac{1}{r^{2}}\sum_{1\leq i<j\leq r}\left(  F^{2}(\rho_{i},\rho
_{j})+\left\Vert \rho_{i}-\rho_{j}\right\Vert _{1}\right)  -\frac{r-1}%
{r}.\nonumber
\end{align}
Taking into account relation (20) in \cite{16} which implies%
\begin{equation}
\sum_{1\leq i<j\leq r}\left(  F^{2}(\rho_{i},\rho_{j})+\left\Vert \rho
_{i}-\rho_{j}\right\Vert _{1}\right)  \leq\sum_{1\leq i<j\leq r}2=r(r-1)
\label{47.2}%
\end{equation}
and substituting this into (\ref{47.1}), we have%
\begin{equation}
Q_{new}^{(r)}-Q_{3}^{(r)}\leq\frac{r-1}{r}-\frac{r-1}{r}=0. \label{47.3}%
\end{equation}
This and relation (\ref{46}) prove the statement.
\end{proof}

\section{General relations}

Theorem 2 and Theorem 3 imply. the following general relations.

\begin{corollary}
For any number $r\geq2$ of arbitrary quantum states $\rho_{1},...,\rho_{_{r}}$
prepared with any probabilities $q_{1},...,q_{_{r}},$ the optimal success
probability (\ref{4}) admits the bounds:%
\begin{align}
&  \frac{1}{r}\left(  1+\frac{1}{(r-1)}\sum_{1\leq i<j\leq r}\left\Vert
q_{i}\rho_{i}-q_{j}\rho_{j}\right\Vert _{1}\right) \nonumber\\
&  \leq\frac{1}{2(r-1)}+\frac{1}{2(r-1)}\max_{j=1,...,r}\left\{
\sum_{i=1,...,r}\left\Vert q_{i}\rho_{i}-q_{j}\rho_{j}\right\Vert _{1}%
+q_{j}(r-2)\right\} \nonumber\\
&  \leq\mathrm{P}_{\rho_{1},...,\rho_{r}|q_{1},...,q_{r}}^{opt.success}%
\label{49}\\
&  \leq\frac{1}{2}+\frac{1}{2}\min_{j=1,...,r}\left\{  \sum_{i=1,...,r}%
\left\Vert q_{i}\rho_{i}-q_{j}\rho_{j}\right\Vert _{1}-q_{j}(r-2)\right\}
\nonumber\\
&  \leq\frac{1}{r}\left(  1+\sum_{1\leq i<j\leq r}\left\Vert q_{i}\rho
_{i}-q_{j}\rho_{j}\right\Vert _{1}\right)  .\nonumber
\end{align}
In the first and the last lines of (\ref{49}), the bounds are quite similar by
its form to the Helstrom upper bound (\ref{5}). For $r=2$, each of the lower
and all bounds in (\ref{49}) reduces to the Helstrom bound in (\ref{5}) and
this proves in the other way the Helstrom result (\ref{6}).
\end{corollary}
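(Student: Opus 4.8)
The plan is to read the Corollary as a pure assembly of results already in hand, namely Theorem 2, Theorem 3, and the upper bound (\ref{30}). No fresh estimate is needed; the only work is to identify each line of the chain (\ref{49}) with a quantity whose bound has already been established. First I would name the five lines. The first line is exactly the lower bound $\mathfrak{L}_{2,new}^{(r)}$ of (\ref{17}); the second line is $\mathfrak{L}_{1,new}^{(r)}$ written in the form (\ref{16}); the third line is the optimal success probability itself; the fourth line is the upper bound $Q_4^{(r)}$ in the second form of (\ref{30}); and the fifth line is $Q_{new}^{(r)}$ of (\ref{32}).

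With these identifications the chain (\ref{49}) splits into four inequalities, each already available. The bottom inequality $\mathfrak{L}_{2,new}^{(r)}\leq\mathfrak{L}_{1,new}^{(r)}$ is relation (\ref{28}) obtained inside the proof of Theorem 2, via the elementary bound (\ref{27}) applied to the nonnegative numbers $q_j+\frac{1}{r-1}\sum_{i}\|(q_i\rho_i-q_j\rho_j)^{(+)}\|_1$, whose sum over $j$ collapses to $\mathfrak{L}_{2,new}^{(r)}$ by (\ref{25}) and (\ref{20.1}). The inequality $\mathfrak{L}_{1,new}^{(r)}\leq\mathrm{P}^{opt.success}$ is the lower bound of Theorem 2 in the forms (\ref{15})--(\ref{16}). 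The inequality $\mathrm{P}^{opt.success}\leq Q_4^{(r)}$ is the upper bound (\ref{30}), itself a consequence of (\ref{13}) after minimizing over the index. Finally $Q_4^{(r)}\leq Q_{new}^{(r)}$ is relation (\ref{33}) of Theorem 3. Concatenating the four yields (\ref{49}).

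It remains to verify the $r=2$ collapse. Here I would substitute $r=2$ directly into the four displayed quantities and check that the chain degenerates to a chain of equalities. The decisive simplifications are that the factor $(r-2)$ annihilates every $q_j$-correction in (\ref{16}) and (\ref{30}), that $\frac{1}{2(r-1)}=\frac{1}{2}$, that each sum over pairs $1\leq i<j\leq r$ reduces to the single pair $(1,2)$, and that each inner sum over $i$ carries one vanishing self-difference term. Each of the four bounds then equals $\frac{1}{2}(1+\|q_1\rho_1-q_2\rho_2\|_1)$, the Helstrom bound (\ref{5}); since $\mathrm{P}^{opt.success}$ is squeezed between two values both equal to it, the Helstrom result (\ref{6}) follows at once.

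I do not expect a genuine obstacle, since all the analytic content already lives in Theorems 2 and 3. The only delicate point is clerical: the two inner lines of (\ref{49}) must be transcribed with the correct $(r-2)$ signs and with the $\frac{1}{2(r-1)}$ versus $\frac{1}{2}$ prefactors so that they coincide verbatim with (\ref{16}) and (\ref{30}). Matching these constants precisely is exactly what guarantees that the two Helstrom-like end lines, $\mathfrak{L}_{2,new}^{(r)}$ and $Q_{new}^{(r)}$, sit correctly outside the two inner bounds.
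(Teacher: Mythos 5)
Your proposal is correct and follows exactly the route the paper intends: the corollary is a pure concatenation of $\mathfrak{L}_{2,new}^{(r)}\leq\mathfrak{L}_{1,new}^{(r)}\leq\mathrm{P}^{opt.success}$ from Theorem~2 (via (\ref{15})--(\ref{17}) and (\ref{28})), $\mathrm{P}^{opt.success}\leq Q_{4}^{(r)}$ from (\ref{30}), and $Q_{4}^{(r)}\leq Q_{new}^{(r)}$ from (\ref{33}), with the $r=2$ collapse checked by direct substitution. The paper gives no separate proof beyond the remark that Theorems~2 and~3 imply the relations, so your assembly is precisely the intended argument.
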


For the equiprobable case, bounds (\ref{49}) take the forms.

\begin{corollary}
For any number $r\geq2,$ of arbitrary quantum states $\rho_{1},...,\rho_{r},$
prepared with equal probabilities $q_{1}=...=q_{1}=\frac{1}{r},$ the optimal
success probability (\ref{4}) admits the bounds:%
\begin{align}
&  \frac{1}{r}\left(  1+\frac{1}{r(r-1)}\sum_{1\leq i<j\leq r}\left\Vert
\rho_{i}-\rho_{j}\right\Vert _{1}\right) \nonumber\\
&  \leq\frac{1}{r}+\frac{1}{2r(r-1)}\max_{j=1,...,r}\left\{  \sum
_{i=1,...,r}\left\Vert \rho_{i}-\rho_{j}\right\Vert _{1}\right\} \nonumber\\
&  \leq\mathrm{P}_{\rho_{1},...,\rho_{_{r}}|\frac{1}{r},...,\frac{1}{r}%
}^{opt.success}\label{50}\\
&  \leq\frac{1}{r}+\frac{1}{2r}\min_{j=1,...,r}\left\{  \sum_{i=1,...,r}%
\left\Vert \rho_{i}-\rho_{j}\right\Vert _{1}\right\} \nonumber\\
&  \leq\frac{1}{r}\left(  1+\frac{1}{r}\sum_{1\leq i<j\leq r}\left\Vert
\rho_{i}-\rho_{j}\right\Vert _{1}\right)  .\nonumber
\end{align}

\end{corollary}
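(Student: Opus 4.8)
The plan is to obtain Corollary 2 as the direct specialization of Corollary 1 to the equiprobable regime $q_1=\cdots=q_r=\frac{1}{r}$, so that no new estimate is required—only substitution and elementary simplification. The single identity driving everything is that, when $q_i=q_j=\frac{1}{r}$,
\begin{equation}
\left\Vert q_i\rho_i-q_j\rho_j\right\Vert_1=\frac{1}{r}\left\Vert \rho_i-\rho_j\right\Vert_1,
\end{equation}
by positive homogeneity of the trace norm. First I would substitute this into each of the five lines of (\ref{49}).

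For the two outer bounds the substitution is immediate: the lower bound $\mathfrak{L}_{2,new}^{(r)}$ becomes $\frac{1}{r}(1+\frac{1}{r(r-1)}\sum_{1\leq i<j\leq r}\Vert\rho_i-\rho_j\Vert_1)$ and the upper bound $Q_{new}^{(r)}$ becomes $\frac{1}{r}(1+\frac{1}{r}\sum_{1\leq i<j\leq r}\Vert\rho_i-\rho_j\Vert_1)$, matching the first and last lines of (\ref{50}). For the two inner bounds $\mathfrak{L}_{1,new}^{(r)}$ and $Q_4^{(r)}$, the key observation is that the term $q_j(r-2)$ becomes $\frac{r-2}{r}$, which no longer depends on $j$; hence it factors out of both the $\max_j$ and the $\min_j$, leaving $\frac{1}{r}\max_j\sum_i\Vert\rho_i-\rho_j\Vert_1$ (respectively the same with $\min_j$) as the only $j$-dependent quantity inside.

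The only step requiring care—and the closest thing to an obstacle—is checking that the leftover constants collapse to exactly $\frac{1}{r}$. For $\mathfrak{L}_{1,new}^{(r)}$ this amounts to verifying $\frac{1}{2(r-1)}(1+\frac{r-2}{r})=\frac{1}{r}$, using $1+\frac{r-2}{r}=\frac{2(r-1)}{r}$; for $Q_4^{(r)}$ it amounts to verifying $\frac{1}{2}(1-\frac{r-2}{r})=\frac{1}{r}$, using $1-\frac{r-2}{r}=\frac{2}{r}$. With these two elementary identities the inner lines reduce to $\frac{1}{r}+\frac{1}{2r(r-1)}\max_j\sum_i\Vert\rho_i-\rho_j\Vert_1$ and $\frac{1}{r}+\frac{1}{2r}\min_j\sum_i\Vert\rho_i-\rho_j\Vert_1$, which are precisely the second and fourth lines of (\ref{50}). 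Since the chain of inequalities in (\ref{49}) is preserved termwise under this substitution, the ordering asserted in (\ref{50}) is inherited directly from Corollary 1, completing the proof.
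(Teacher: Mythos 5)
Your proposal is correct and coincides with the paper's intended argument: the paper derives Corollary 2 simply by specializing the chain (\ref{49}) of Corollary 1 to $q_{1}=\cdots=q_{r}=\frac{1}{r}$, exactly as you do via homogeneity of the trace norm and the two constant identities $\frac{1}{2(r-1)}\bigl(1+\frac{r-2}{r}\bigr)=\frac{1}{r}$ and $\frac{1}{2}\bigl(1-\frac{r-2}{r}\bigr)=\frac{1}{r}$. All substitutions and simplifications check out.
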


\subsection{Example}

For the numerical comparison of the new lower bound (\ref{15})--(\ref{17}) and
the new upper bound (\ref{32}) with the known lower bounds (\ref{28.1}%
)--(\ref{28.3}) and the known upper bounds (\ref{30}), (\ref{40})--(\ref{42}),
we analyze the discrimination between the following three equiprobable qubit
states
\begin{align}
\rho_{1}  &  =\frac{7}{8}\left\vert 0\right\rangle \left\langle 0\right\vert
+\frac{1}{8}\left\vert 1\right\rangle \left\langle 1\right\vert ,\text{
\ \ }\rho_{2}=\frac{5}{8}\left\vert 0\right\rangle \left\langle 0\right\vert
+\frac{3}{8}\left\vert 1\right\rangle \left\langle 1\right\vert ,\label{51}\\
\rho_{3}  &  =\frac{3}{4}\left\vert 0\right\rangle \left\langle 0\right\vert
+\frac{1}{4}\left\vert 1\right\rangle \left\langle 1\right\vert .\nonumber
\end{align}
In this case, $\left\Vert \rho_{1}-\rho_{2}\right\Vert _{1}=\frac{1}{2},$
$\left\Vert \rho_{1}-\rho_{3}\right\Vert _{1}=\frac{1}{4},$ $\left\Vert
\rho_{2}-\rho_{3}\right\Vert _{1}=\frac{1}{4}$ and%
\begin{align}
\sum_{1\leq i<j\leq3}\left\Vert \rho_{i}-\rho_{j}\right\Vert _{1}  &
=1,\text{ \ \ }\min_{j=1,2,3}\left\{  \sum_{i=1,2,3}\left\Vert \rho_{i}%
-\rho_{j}\right\Vert _{1}\right\}  =\frac{1}{2},\label{52}\\
\max_{j=1,2,3}\left\{  \sum_{i=1,2,3}\left\Vert \rho_{i}-\rho_{j}\right\Vert
_{1}\right\}   &  =\frac{3}{4}.\nonumber
\end{align}
Therefore, in the case considered, the values of the new bounds (\ref{32}),
(\ref{15}), (\ref{17}) are equal to%
\begin{align}
\mathfrak{L}_{1,new}  &  =\frac{19}{48}=0.3958,\text{ \ \ }\mathfrak{L}%
_{2,new}=\frac{7}{18}\simeq0.3889,\label{53}\\
Q_{new}  &  =\frac{4}{9}\simeq0.4444,\nonumber
\end{align}
while the values of the known bounds (\ref{30}), (\ref{40}), (\ref{28.1}) are
\begin{equation}
Q_{4}=\frac{5}{12}\simeq0.4166,\text{ \ \ }Q_{2}=\frac{7}{12}\simeq
0.5833,\text{ \ \ }\mathfrak{L}_{1}=\frac{1}{3}. \label{54}%
\end{equation}

Note that since states (\ref{51}) mutually commute and are of the form
$\rho_{i}=\sum\limits_{n=0,1}\lambda_{n}^{(i)}\left\vert n\right\rangle
\left\langle n\right\vert ,$ $i=1,2,3,$ the optimal success probability
\cite{4}%
\begin{equation}
\mathrm{P}_{\rho_{1},\rho_{2},\rho_{_{3}}|\frac{1}{3},\frac{1}{3},\frac{1}{3}%
}^{opt.success}=\frac{1}{3}\sum\limits_{n=0,1}\left(  \max_{i=1,2,3}%
\lambda_{n}^{(i)}\right)  =\frac{5}{12}=Q_{4}\simeq0.4166. \label{55}%
\end{equation}

For the calculation of the lower bounds (\ref{28.2}), (\ref{28.3}) and the
upper bounds (\ref{41}), (\ref{42}) for equiprobable states (\ref{51}), we
calculate fidelities $F_{ij}:=\left\Vert \sqrt{\rho_{i}}\sqrt{\rho_{j}%
}\right\Vert _{1}$ for states (\ref{51}):%
\begin{align}
F_{12}  &  =\frac{\sqrt{35}+\sqrt{3}}{8}\simeq0.9560,\text{ \ \ }F_{13}%
=\frac{\sqrt{42}+\sqrt{2}}{8}\simeq0,9868,\label{56}\\
F_{23}  &  =\frac{\sqrt{30}+\sqrt{6}}{8}\simeq0,9909,\nonumber
\end{align}
and the trace%
\begin{equation}
\mathrm{tr}\left[  \sqrt{%
{\textstyle\sum\limits_{i=1,2,3}}
\rho_{i}^{2}}\right]  =\frac{\sqrt{110}+\sqrt{14}}{8}\simeq1,7787. \label{57}%
\end{equation}
Therefore,
\begin{align}
Q_{3}  &  =1-\frac{1}{9}\sum_{1\leq i<j\leq3}F_{ij}^{2}\simeq0.6812,\text{
\ \ \ }Q_{5}=\frac{1}{3}\mathrm{tr}\left[  \sqrt{%
{\textstyle\sum\limits_{i=1,2,3}}
\rho_{i}^{2}}\right]  \simeq0,5929,\label{58}\\
\mathfrak{L}_{2}  &  =1-\frac{1}{3}\sum_{1\leq i<j\leq3}F_{ij}\simeq
0,0221,\text{ \ \ \ }\mathfrak{L}_{3}=\text{\ }\left(  Q_{5}\right)
^{2}\simeq0.3515.\nonumber
\end{align}
From (\ref{53})--(\ref{58}) it follows that, in the case considered,%
\begin{equation}
\mathrm{P}_{\rho_{1},\rho_{2},\rho_{_{3}}|\frac{1}{3},\frac{1}{3},\frac{1}{3}%
}^{opt.success}=Q_{4}<Q_{new}<Q_{2}<Q_{5}<Q_{3} \label{59}%
\end{equation}
and
\begin{equation}
\mathrm{P}_{\rho_{1},\rho_{2},\rho_{_{3}}|\frac{1}{3},\frac{1}{3},\frac{1}{3}%
}^{opt.success}>\mathfrak{L}_{1,new}>\mathfrak{L}_{2,new}>\mathfrak{L}%
_{3}>\mathfrak{L}_{1}>\mathfrak{L}_{2}, \label{60}%
\end{equation}
that is, the values of the new lower bounds $\mathfrak{L}_{new,1},$
$\mathfrak{L}_{new,2}$ and the new upper bound $Q_{new}$ are tighter than the
values of the known lower bounds (\ref{28.1})--(\ref{28.3}) and the known
upper bounds (\ref{40})--(\ref{42}), respectively.

\section{Acknowledgments}

The author is grateful to A. S. Holevo and Min Namkung for useful discussions.

\section{Conclusions}

In the present article, for the optimal success probability (\ref{4}), we find
for all $r\geq2:$ (i) the new general lower bounds (Theorem 2) and specify
their relation (Proposition 1) to the general lower bounds known in the
literature; (ii) the new general upper bound (Theorem 3) and specify its
relation (Proposition 2) to the known general upper bounds.

We also present the example where the values of the new general analytical
bounds, lower and upper, on the optimal success probability are tighter than
the values of the known lower bounds (\ref{28.1})--(\ref{28.3}) and the known
upper bounds (\ref{40})--(\ref{42}), respectively.

The new upper bound (\ref{32}) on the optimal success probability has the form
explicitly generalizing to $r>2$ the Helstrom bound in (\ref{5}) and is easily
calculated. For $r=2$, each of our new bounds, lower and upper, reduces to the
Helstrom\ bound in (\ref{5}), and this proves in the other way the Helstrom
result (\ref{6}).

\end{document}